\title{The number of distinct distances from a vertex of a convex polygon}
\date{March 22, 2013}
\author{Gabriel Nivasch\footnote{Ariel University, Ariel, Israel. Work was done when the author was at EPFL, Lausanne, Switzerland.}\\\texttt{gabrieln@ariel.ac.il}
\and J\'anos Pach\footnote{EPFL, Lausanne, Switzerland and R\'enyi Institute, Budapest, Hungary. Research partially supported by Swiss National Science Foundation Grants 200021-137574 and 200020-144531, by Hungarian Science Foundation Grant OTKA NN 102029 under the EuroGIGA programs ComPoSe and GraDR, and by NSF grant CCF-08-30272.}\\\texttt{pach@cims.nyu.edu}
\and Rom Pinchasi\footnote{Mathematics Department, Technion---Israel Institute of Technology, Haifa 32000, Israel.}\\\texttt{room@math.technion.ac.il} \and Shira Zerbib\footnotemark[\value{footnote}]\\\texttt{zgshira@tx.technion.ac.il}
}
\newtheorem{theorem}{Theorem}
\newtheorem{lemma}[theorem]{Lemma}
\newtheorem{corollary}[theorem]{Corollary}
\theoremstyle{definition}
\newtheorem{definition}[theorem]{Definition}
\newtheorem{problem}{Problem}
\newcommand{\comment}[1]{}
\newcommand{\ignore}[1]{}
\def\clap#1{\hbox to 0pt{\hss#1\hss}}
  \def\moverlay{\mathpalette\mov@rlay}
  \def\mov@rlay#1#2{\leavevmode\vtop{%
    \baselineskip\z@skip \lineskiplimit-\maxdimen
    \ialign{\hfil$#1##$\hfil\cr#2\crcr}}}
\def\mangle{{\measuredangle}}
\newcommand{\remove}[1]{}
\begin{document}

\maketitle

\begin{abstract}
Erd\H os conjectured in 1946 that every $n$-point set $P$ in convex position in the plane contains a point that determines at least $\lfloor n/2 \rfloor$ distinct distances to the other points of $P$. The best known lower bound due to
Dumitrescu (2006) is $13n/36 - O(1)$. In the present note, we slightly improve
on this result to $(13/36 + \varepsilon)n - O(1)$ for $\varepsilon \approx 1/23000$. Our main ingredient is an improved bound on the maximum number of isosceles triangles determined by $P$.
\end{abstract}

\section{Introduction}

We say that a point set $P$ {\em determines} a distance $d$ if $P$ contains two elements such that their Euclidean distance is $d$. Given a positive integer $n$, what is the maximum number $g(n)$ such that every set of $n$ points in the plane determines at least $g(n)$ distinct distances? According to a famous conjecture of Erd\H os \cite{E46}, we have
$g(n)=\Omega(\frac{n}{\sqrt{\log n}})$. The number of distinct distances determined by a $\sqrt{n}\times\sqrt{n}$ piece of the integer lattice is $O(\frac{n}{\sqrt{\log n}})$ \cite{E46}, which shows that his conjecture, if true, would be essentially best possible.

In a recent breakthrough, Guth and Katz \cite{GK11} have come very close to proving Erd\H os's conjecture. They showed that $g(n) = \Omega(\frac{n}{\log n})$. This is a substantial improvement on the previous bound of $g(n)\ge n^{0.864\ldots}$ by Katz and Tardos \cite{KaT04}, which was the last step in a long series of successive results \cite{Mo52}, \cite{Ch84}, \cite{ChST92}, \cite{Sz93}, \cite{SoT01}, \cite{Ta03}.

In the same paper, Erd\H os \cite{E46} also made a much stronger conjecture. Let $f(n)$ denote the maximum number such that every set of $n$ points in the plane contains a point from which there are at least $f(n)$ distinct distances to the other $n-1$ points of the set.  Clearly, we have $f(n) \leq g(n)$. Erd\H os conjectured that
in fact $f(n) = \Omega (\frac{n}{\sqrt{\log n}})$. This conjecture is still wide open, although all the above mentioned lower bounds, with the exception of the one due to Guth and Katz, also apply to $f(n)$. In particular, the best known lower bound of $f(n)$ is still $f(n)\ge n^{0.864\ldots}$ by Katz and Tardos.

As Erd\H os suggested, the same question can be studied for point sets with special properties. We say that $n$ points in the plane are in {\em convex position} if they form the vertex set of a convex $n$-gon. A set of $n$ points is in {\em general position} if no $3$ of its elements are collinear. Let $f_{\rm conv}(n)$ (and $f_{\rm gen}(n)$) denote the largest number such that every set of $n$ points in the plane in convex (resp., in general) position in the plane contains a point from which there are at least these many distinct distances to the remaining $n-1$ points. Since every set in convex position is also in general position, we have  $f_{\rm gen}(n) \leq f_{\rm conv}(n)$. The vertex set of a regular $n$-gon shows that
$$f_{\rm gen}(n) \leq f_{\rm conv}(n) \leq \left\lfloor \frac{n}{2} \right\rfloor.$$

Erd\H os conjectured that $f_{\rm conv}(n)= \lfloor \frac{n}{2} \rfloor$ for all $n\ge 2$. It is perfectly possible that the same equality holds for $f_{\rm gen}(n)$. The weaker statement that every set of $n$ points in convex position determines $\lfloor \frac{n}{2} \rfloor$ distinct distances was proved by Altman \cite{Al63}, \cite{Al72}. Leo Moser \cite{Mo52} proved that $f_{\rm conv}(n)\ge \frac{n}{3}$, while Szemer\'edi established essentially the same lower bound for point sets in general position: By a very simple double-counting argument, he established the inequality $f_{\rm conv}(n)\ge f_{\rm gen}(n)\ge \frac{n-1}{3}$; see \cite{E75}, \cite{PaA95}.

By combining and improving the arguments of Moser and Szemer\'edi, Dumitrescu \cite{Du06} established the bound
$$f_{\rm conv}(n) \geq \left\lceil \frac{13n-6}{36} \right\rceil.$$

In the present note, we show that Dumitrescu's bound can be further improved.

\begin{theorem}\label{thm:main}
The maximum number $f_{\rm conv}(n)$
such that any set of $n$ points in convex position in the plane
contains a point that determines at least this number of distinct distances
to the other points of the set satisfies:
$$f_{\rm conv}(n) \geq \left(\frac{13}{36}+\varepsilon\right)n - O(1),$$
for a suitable positive constant $\varepsilon$.
\end{theorem}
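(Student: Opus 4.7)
The plan is to refine Dumitrescu's analysis, whose starting point is the Szemer\'edi-style double counting of \emph{isosceles apex triples}: ordered triples $(p,q,r)\in P^3$ with $q\ne r$ and $|pq|=|pr|$. If from each vertex $p$ the other $n-1$ points fall into $d(p)$ distance classes of sizes $n_1,\dots,n_{d(p)}$, Cauchy--Schwarz yields $\sum_i n_i(n_i-1)\ge (n-1)^2/d(p)-(n-1)$, so any upper bound on the total count $I(P)$ of isosceles apex triples translates directly into a lower bound on some $d(p)$. The trivial estimate $I(P)\le 2n(n-1)$ (a perpendicular bisector contains at most two vertices of a strictly convex polygon) gives Szemer\'edi's $n/3$, and Dumitrescu's $13n/36$ comes from mixing this double counting with Moser's argument that controls how many times the perpendicular bisector of a polygon edge passes through another vertex.

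My first task is to locate, inside Dumitrescu's proof, the precise places where the trivial isosceles bound is invoked and to identify the ``extremal'' configurations that would be needed for his chain of inequalities to be simultaneously tight. The expectation is that saturation requires many perpendicular bisectors of pairs $\{q,r\}\subset P$ to carry the maximum two apices from $P$; but then all four points $p_1,p_2,q,r$ lie on a common circle, and in convex position the four cocircular vertices are forced to appear in a very restricted cyclic order around the polygon. Such 4-tuples are rigid enough that each one obstructs many further isosceles configurations that would be needed for the other near-tight inequalities in Dumitrescu's argument.

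The main step I would carry out is to turn this rigidity into a quantitative bound of the form $I(P)\le 2n(n-1) - cn^2 + O(n)$ for some explicit $c>0$, or, equivalently, a dichotomy: either $I(P)$ is noticeably below the trivial bound (which, plugged into the Cauchy--Schwarz estimate, immediately raises the guaranteed $\max_p d(p)$) or else $P$ contains many cocircular 4-tuples, each of which forces an edge or short diagonal whose perpendicular bisector hits a vertex of $P$, strengthening the Moser side of Dumitrescu's accounting. Either branch of the dichotomy feeds back into his linear-programming-type combination of inequalities and shifts the extremal ratio from $13/36$ to $13/36+\varepsilon$.

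The hard part is the quantitative gain in the isosceles count. A single cocircular 4-tuple only saves $O(1)$ triples, so one must set up a careful charging scheme that assigns every ``bad'' apex pair on a common bisector either to a vertex of $P$ (to be amortized against Moser's perpendicular-bisector count) or to a convex cocircular quadruple (to be amortized against a forbidden cyclic pattern), while avoiding overcounting when a vertex plays many roles. I would work out this charging with an eye toward the cleanest constants, and the very small explicit value $\varepsilon\approx 1/23000$ advertised in Theorem~\ref{thm:main} reflects the modest per-configuration gain that such an argument can realistically produce before the bookkeeping becomes prohibitive.
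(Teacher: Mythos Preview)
Your proposal contains a concrete error that undermines the whole plan. You assert that if two apices $p_1,p_2\in P$ both lie on the perpendicular bisector of $\{q,r\}$, then $p_1,p_2,q,r$ are concyclic. This is false: take $q=(1,0)$, $r=(-1,0)$, $p_1=(0,1)$, $p_2=(0,2)$; these four points are in convex position, both $p_1$ and $p_2$ lie on the bisector of $qr$, yet they are not concyclic. In general the concyclicity condition reduces to $c^2=-ab$ in the obvious coordinates, which almost never holds. Since the entire dichotomy and charging scheme you outline rests on extracting rigidity from these alleged cocircular quadruples, the argument has no foundation. Beyond this, the proposal is a research plan rather than a proof: phrases like ``I would work out this charging'' and ``the expectation is that'' signal that the actual combinatorial work has not been done.

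The paper's route is quite different and does not touch cocircularity at all. It stays within Dumitrescu's framework of counting \emph{good edges} (pairs whose bisector meets $P$ at most once) inside caps, and squeezes out extra good edges via two new devices: a geometric inequality (Lemma~\ref{lemma:tech}) saying that if $c$ witnesses $ae$ and $d$ witnesses $be$ in a cap $a,b,c,d,e$, then $|ab|>|cd|$, which feeds into a bound (Lemma~\ref{lemma:sequence}) of $\tfrac{7}{8}t^2$ on witnessed edges between the two halves of a $2t$-point cap; and an iterative peeling by smallest enclosing circles, with a case split on whether the peeled triple drifts far (in circular distance) from the initial triple. One branch harvests extra good edges straddling the new cap boundaries via Lemma~\ref{lemma:sequence}; the other harvests good edges adjacent to the peeled points via Lemma~\ref{cor:half_easy}. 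Optimizing two parameters gives $n^2/11.981$ good edges, hence $Z(P)\le (10.981/11.981)n^2$, hence the stated $\varepsilon$.
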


Our argument as presented here yields a little over $\varepsilon > 1/23000$. It is quite possible that this bound can be slightly improved through tweaks in different places, though we have abstained from doing so in the interest of simplicity.

As we shall see later, the crucial point in the argument of Szemer\'edi is to
estimate in two different ways the number of isosceles triangles
determined by the point set $P$.

Given a finite point set $P$ in the plane, we denote by $Z(P)$ the number of unordered pairs $\{(p,a),(p,b)\}$, such that $p,a,b \in P$ are three distinct points with $|pa|=|pb|$. (Here $|pa|$ stands for the length of the segment $pa$.) In other words, $Z(P)$ is the number of isosceles triangles determined by $P$, except that each equilateral triangle $abc$ is counted as {\em three} isosceles triangles (by letting each of $a$, $b$, and $c$ play the role of $p$).

If $P$ is a set of $n$ points in general
position, then it follows easily that $Z(P)\le 2\binom{n}{2}$.
Dumitrescu \cite{Du06} showed that, if $P$ is in
\emph{convex position}, then $Z(P) \le n^2(1-\frac{1}{12})$; this led him to his improved lower bound for $f_{\rm conv}(n)$.

In this paper we further improve Dumitrescu's upper bound for $Z(P)$ for $P$ in convex position. Since this is an independently interesting problem, we state it
explicitly:

\begin{problem}\label{problem:isosceles triangles}
What is the largest possible value of $Z(P)$, the number of isosceles triangles determined by $P$ (as defined above), for a planar $n$-point set $P$ in convex (or in general) position?
\end{problem}

To make our paper self-contained, in the next section we briefly sketch and later use the arguments of Moser, Szemer\'edi, and Dumitrescu. In Section~\ref{sec:new}, we prove three auxiliary results, which are then used in Section~\ref{sec:goodedges} to bound the number of isosceles triangles and finish the proof of Theorem~\ref{thm:main}.
Section~\ref{sec:last} contains some concluding remarks.

\section{The arguments of Szemer\'edi, Moser, and Dumitrescu}

First, we sketch Szemer\'edi's argument to prove the inequality $f_{\rm gen}(n) \geq \frac{n-1}{3}$. Let $P$ be a set of $n$ points in general position in the plane, and assume that for every element of $p\in P$, the number of distinct distances to the other $n-1$ points is at most $k$. Let $Z(P)$ denote the number of isosceles triangles determined by $P$, as defined above.

Clearly, we have $Z(P) \leq 2\binom{n}{2}$, because for each pair $a,b$ there exist at most two points $p \in P$ with $|pa|=|pb|$. This follows from the fact that all such points $p$ must lie on the perpendicular bisector of $ab$, and $P$ has no three collinear points. On the other hand, using the convexity of the function $\binom{x}{2}$ and Jensen's inequality, for every point $p \in P$, the number of pairs $\{a,b\}$ with $|pa|=|pb|$ is minimized when the $n-1$ points of $P \setminus \{p\}$ are distributed among the at most $k$ concentric circles around $p$ as equally as possible. That is, the number of such pairs $\{a,b\}$ is at least $k \binom{\frac{n-1}{k}}{2}$. Comparing the two bounds, we obtain
        $$nk\binom{\frac{n-1}{k}}{2}\le Z(P)\le 2\binom{n}{2},$$
which yields that $k \geq \frac{n-1}{3}$. Hence, we have
        $$f_{\rm gen}(n) \geq \frac{n-1}{3},$$
as stated.

It is obvious from the above argument that if we manage to improve the upper bound on $Z(P)$, then we obtain a better lower bound on the largest number of distinct distances measured from a point of $P$. The following lemma can be found also in \cite{Du06} and gives the relation between upper bounds for $Z(P)$ and
lower bounds for $f_{\rm conv}(n)$.

\begin{lemma}\label{lemma:improvement}
Suppose that the number $Z(P)$ of isosceles triangles determined by an $n$-point set $P$ (in general position in the plane) satisfies $Z(P)\le\alpha n^2+O(n)$ for some $\alpha\le 1$. Then $P$ contains a point from which there are at least $\frac{2-\alpha}{3}n-O(1)$ distinct distances.
\end{lemma}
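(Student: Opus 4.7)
The plan is a Szemer\'edi-style double-counting argument. Assume for contradiction that $\max_{p\in P} k_p < \tfrac{2-\alpha}{3}\,n - C$ for a sufficiently large absolute constant $C$, where $k_p$ denotes the number of distinct distances from $p$ to the other $n-1$ points of $P$; the aim is to derive a lower bound on $Z(P)$ that contradicts the hypothesized upper bound $\alpha n^2 + O(n)$.

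For each $p \in P$, let $n_1^{(p)},\dots,n_{k_p}^{(p)}$ denote the sizes of the concentric circles centered at $p$ that partition $P \setminus \{p\}$ by distance, so $\sum_i n_i^{(p)} = n-1$ and the contribution of $p$ to $Z(P)$ equals $m_p = \sum_i \binom{n_i^{(p)}}{2}$. Applying Jensen's inequality to the convex function $x \mapsto \binom{x}{2}$ --- or equivalently Cauchy--Schwarz in the form $\sum_i (n_i^{(p)})^2 \ge (n-1)^2/k_p$ --- one obtains
\[
m_p \;\ge\; k_p\binom{(n-1)/k_p}{2} \;=\; \frac{(n-1)(n-1-k_p)}{2\,k_p}.
\]
The right-hand side is a decreasing function of $k_p$, so the contradiction hypothesis yields the common lower bound $m_p \ge (n-1)(n-1-k)/(2k)$, where $k := \tfrac{2-\alpha}{3}\,n - C$.

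Summing over the $n$ points of $P$,
\[
Z(P) \;=\; \sum_{p \in P} m_p \;\ge\; \frac{n(n-1)(n-1-k)}{2\,k}.
\]
Setting this lower bound against the hypothesis $Z(P) \le \alpha n^2 + O(n)$, expanding, and choosing the constant $C$ sufficiently large produces a strict violation of the hypothesis, forcing the existence of some $p \in P$ with $k_p \ge \tfrac{2-\alpha}{3}\,n - O(1)$. The principal technical step is precisely this final algebraic manipulation: one must combine the Cauchy--Schwarz estimate with the assumed upper bound carefully enough that the target constant $\tfrac{2-\alpha}{3}$ emerges exactly, with the lower-order corrections (including the replacement of $n-1$ by $n$ and the non-integrality of $k$) absorbed into the $O(1)$ slack. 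This algebraic verification --- ensuring the right-hand constant comes out as claimed rather than the slightly weaker $n/(1+2\alpha)$ that falls out of a naive application of Jensen --- is the main place where care is required.
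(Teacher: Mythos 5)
Your framework (double counting $Z(P)$ and lower-bounding each point's contribution by the most even distribution of $P\setminus\{p\}$ over at most $k$ concentric circles) is the right one, but the proof as written has a genuine gap --- one you actually name yourself without resolving. The continuous Jensen/Cauchy--Schwarz estimate
\[
m_p \;\ge\; k\binom{(n-1)/k}{2} \;=\; \frac{(n-1)(n-1-k)}{2k},
\]
combined with $Z(P)\le\alpha n^2+O(n)$, yields exactly $k\ge \frac{n}{1+2\alpha}-O(1)$ and nothing more. Since
\[
\frac{1}{1+2\alpha}-\frac{2-\alpha}{3}=\frac{(2\alpha-1)(\alpha-1)}{3(1+2\alpha)}<0
\quad\text{for } \tfrac12<\alpha<1
\]
(and the paper applies the lemma with $\alpha\approx 11/12$), the constant $\frac{2-\alpha}{3}$ is \emph{strictly stronger} than what this estimate can deliver. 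So the step you defer as ``the final algebraic manipulation'' to be handled ``carefully enough that the target constant emerges exactly'' is precisely the step that fails: no algebraic care extracts $\frac{2-\alpha}{3}$ from the inequality $Z(P)\ge n(n-1)(n-1-k)/(2k)$.

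The missing idea, which is how the paper proceeds, is to exploit the integrality of the circle sizes. One may assume $2\le (n-1)/k\le 3$ (otherwise $k\ge n/2$ and one is already done in the relevant range of $\alpha$). In that regime the minimum of $\sum_i\binom{n_i}{2}$ over integer partitions of $n-1$ into at most $k$ positive parts is attained when every part has size $2$ or $3$: with $3k-n+1$ circles of size $2$ and $n-1-2k$ circles of size $3$ one gets $m_p\ge (3k-n+1)\cdot 1+(n-1-2k)\cdot 3=2(n-1)-3k$. This quantity is the secant of the convex function $\binom{x}{2}$ through $x=2$ and $x=3$, hence lies above your bound $k\binom{(n-1)/k}{2}$ throughout $2\le (n-1)/k\le 3$. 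Summing over $p$ gives $Z(P)\ge n\bigl(2(n-1)-3k\bigr)$, and comparing with $\alpha n^2+O(n)$ yields $k\ge\frac{2-\alpha}{3}n-O(1)$, as claimed. You need to insert this discrete optimization step in place of the plain Jensen bound for the lemma to hold with the stated constant.
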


\begin{proof}
Assume, as above, that for every point $p\in P$, the remaining $n-1$ points lie on at most $k$ circles centered at $p$. By Szemer\'edi's proof, we also know that $2\le \frac{n-1}k\le 3$. Otherwise, we have $k\ge n/2$, and we are done.

This means that for each point $p \in P$, the number of pairs $\{a,b\}$ with $|pa|=|pb|$ is minimized when there are precisely $k$ circles around $p$ that pass through at least one element of $P$, and each of these circles contains either 2 or 3 points. Since $n-1=(3k-n+1)2+(n-1-2k)3$, we can assume that in the worst case $3k-n+1$ circles contain 2 points and $n-1-2k$ circles contain 3 points. Thus, the number of pairs $\{a,b\}$ with $|pa|=|pb|$ is at least
$(3k-n+1)+(n-1-2k)3=2(n-1)-3k$. Therefore, $Z(P) \geq n(2(n-1)-3k)$. Combining this with the upper bound on $Z(P)\le \alpha n^2+O(n)$, the lemma follows.
\end{proof}

Dumitrescu \cite{Du06} showed that if $P$ is a set of $n$ points in {\em convex} position, then $Z(P) \leq \frac{11}{12}n^2$. Plugging this bound into Lemma \ref{lemma:improvement}, he obtained $f_{\rm conv}(n)\ge \frac{13}{36}n-O(1)$.

In the present note, we slightly improve on Dumitrescu's upper bound on $Z(P)$ for point sets in convex position, and hence on his lower bound for $f_{\rm conv}(n)$. For this, we first recall some terminology of Moser \cite{Mo52} and Dumitrescu \cite{Du06}.

\begin{definition}
A set of points $Q$ in convex position is called a \emph{cap with endpoints $a$ and $b$} if the elements of $Q$ can be enumerated in cyclic order, as $x_{1}, x_2, \ldots, x_{t}$, such that $x_1=a, x_t=b$ and there is a circle $C$ passing through $a$ and $b$
such that all $x_i$ lie in the closed region bounded by $ab$ and the shorter arc of $C$ delimited by $a$ and $b$. (If the two arcs of $C$ are of the same length, either of them will do. See Figure~\ref{fig_cap}.)
\end{definition}

\begin{figure}
\centerline{\includegraphics{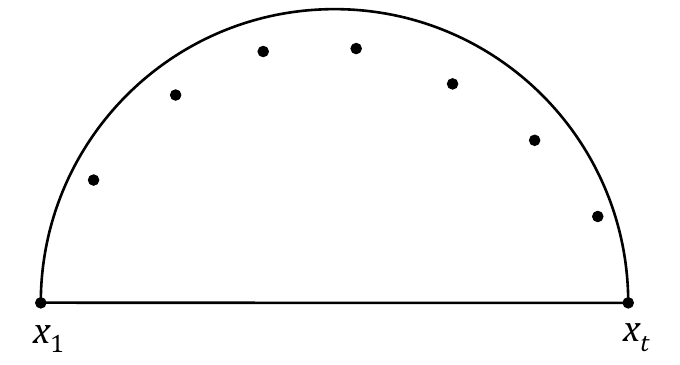}}
\caption{\label{fig_cap}A cap.}
\end{figure}

It is not hard to see (using Thales' theorem) that $x_{1}, x_2, \ldots, x_{t}$ form a cap if and only if $\mangle x_{1}x_{i}x_{t} \geq \frac{\pi}{2}$ for all $1<i<t$. Using the convexity of the point set, this is further equivalent to the condition that $\mangle x_{i}x_{j}x_{k} \geq \frac{\pi}{2}$ for every $1 \leq i < j < k \leq t$. This implies:
\begin{enumerate}
\item For every cap $x_{1}, x_2, \ldots, x_{t}$, we have
$$|x_{1}x_{2}|<|x_{1}x_{3}|< \ldots <|x_{1}x_{t}|.$$
(Indeed, since $\mangle x_{1}x_{i}x_{i+1}$ is the largest angle in the triangle $x_{1}x_{i}x_{i+1}$, we have
$|x_{1}x_{i+1}|>|x_{1}x_{i}|$.)
\item Every subset of a cap is also a cap.
\end{enumerate}

Moser \cite{Mo52} noticed that the smallest circumscribing circle around a set $P$ in convex position divides it into at most three caps that meet only at their endpoints. At least one of them has length $t\ge \lceil\frac{n}3\rceil+1$. Therefore, using property 1 above he obtained $f_{\rm conv}(n)\ge \lceil\frac{n}3\rceil$.

\begin{definition}
Let $P$ be a set of points in convex position. An unordered pair (edge) $\{a,b\}\subset P$ is called {\em good} if the perpendicular bisector of the segment $ab$ passes through at most one point of $P$. Otherwise, it is called {\em bad}. The pair (edge) $\{a,b\}$ will be often identified with the segment $ab=ba$.
\end{definition}

\begin{definition}
Let $Q \subset P$ be a cap with endpoints $a$ and $b$, and let $c,d\in Q$. A point $x \in P$ is called a {\em witness for the edge $cd$} if $x$ lies on the perpendicular bisector of the segment $cd$, and the line $ab$ does not separate $x$ from the points of $Q$.
\end{definition}

\begin{figure}
\centerline{\includegraphics{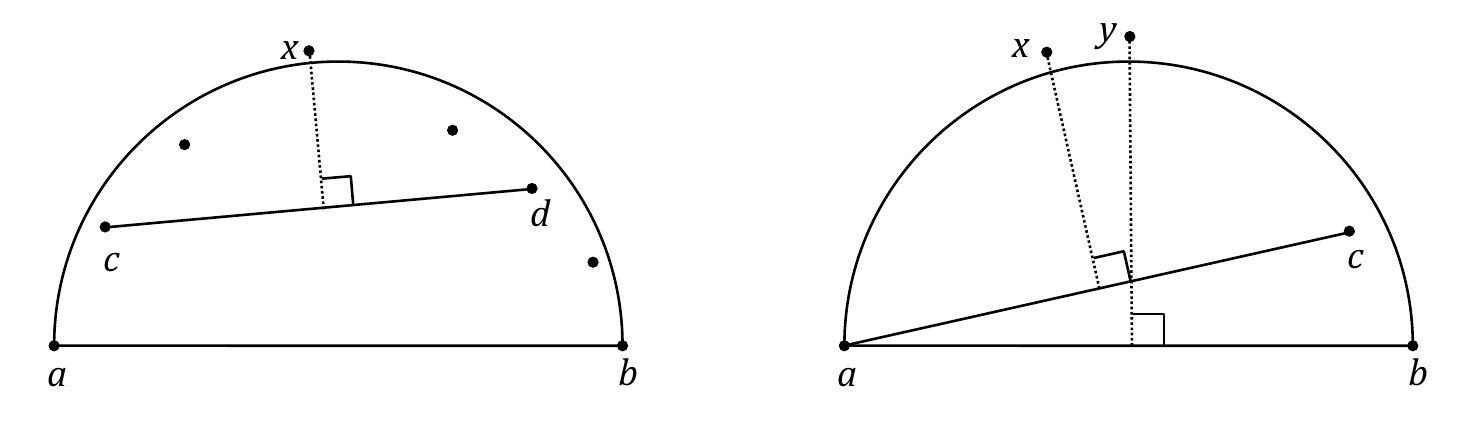}}
\caption{\label{fig_witness}Left: A witness for an edge in a cap (note that the witness does not necessarily belong to the cap). Right: Witnesses for two edges in a cap sharing a common vertex.}
\end{figure}

Since $P$ is in convex position, the witness $x$ for an edge $cd$, if exists, is uniquely determined. Furthermore, the witness $x$ for $cd$ must lie between the two points $c$ and $d$ in the circular order around $P$ from $a$ to $b$ (see Figure~\ref{fig_witness}, left).

The following lemma is a stronger version of a statement from
\cite{Du06}.

\begin{lemma}\label{lemma:monotone}
Let $Q\subset P$ be a cap with endpoints $a$ and $b$. Let $c$ be a point of $Q$ and assume that $x$ and $y$ from $P$ are witnesses for $ac$ and $ab$, respectively. Then $x$ lies between $a$ and $y$ in the circular order around $P$ from $a$ to $b$. In particular, we have $x \neq y$. See Figure~\ref{fig_witness}, right.
\end{lemma}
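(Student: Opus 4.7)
My plan is to introduce coordinates in which $a=(-1,0)$, $b=(1,0)$, and the cap $Q$ sits in the closed upper half plane, so that the perpendicular bisector of $ab$ is the $y$-axis and the witness $y$ is a vertex of $P$ on the positive $y$-axis. The geometric heart of the argument will be the observation that the entire cap $Q$ lies inside the closed disk of diameter $ab$: for every $c\in Q\setminus\{a,b\}$ the cap characterization forces $\angle acb\geq \pi/2$, which is equivalent to $c$ lying inside or on the circle with diameter $ab$. In coordinates, writing $c=(c_1,c_2)$ with $c_2>0$, this reads $c_1^2+c_2^2\leq 1$.

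From here the computation is direct: the perpendicular bisector of $ac$ has slope $-(c_1+1)/c_2<0$ and meets the $y$-axis at height $(c_1^2+c_2^2-1)/(2c_2)\leq 0$. Combining the negative slope with the non-positive $y$-intercept, every point of this bisector lying in the open upper half plane has strictly negative $x$-coordinate. Since $P$ is in convex position (no three collinear), the witness $x$ cannot lie on the line $ab$, so $x$ sits in the open upper half plane on this bisector and must therefore satisfy $x$-coordinate $<0$, whereas $y$ has $x$-coordinate $0$.

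To convert this coordinate information into the claimed cyclic-order statement, I will use that both $x$ and $y$ lie on the convex chain $\gamma\subset\partial P$ running from $a$ to $b$ on the cap side of the line $ab$. Because the endpoints of $\gamma$ lie on opposite sides of the $y$-axis and $\gamma$ is convex, $\gamma$ crosses the $y$-axis exactly once; since $y$ itself lies on both $\gamma$ and the $y$-axis, this unique crossing must be the vertex $y$. Consequently every vertex of $\gamma$ with negative $x$-coordinate precedes $y$ in the cyclic order from $a$ to $b$, and since $x$ has negative $x$-coordinate and is distinct from $a$, it lies strictly between $a$ and $y$, as required.

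I expect the main subtlety to be establishing the containment of $Q$ inside the disk of diameter $ab$; once this is in hand, the rest is a short coordinate calculation together with the standard fact that a convex polygonal chain whose endpoints lie on opposite sides of a line meets that line in exactly one point.
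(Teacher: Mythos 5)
Your proof is correct, but it takes a genuinely different route from the paper's. The paper argues by contradiction and stays coordinate-free: assuming $y$ lies between $a$ and $x$ (or $y=x$), it gets $|yc|\geq|ya|=|yb|$, hence $\mangle ycb<\pi/2$, while $\mangle acb\leq\mangle ycb$ and the cap property forces $\mangle acb\geq\pi/2$ --- a contradiction. You instead give a direct computation: normalizing so that the Thales circle over $ab$ is the unit circle, you show that the perpendicular bisector of $ac$ has negative slope and non-positive $y$-intercept (precisely because $c$ lies in the closed disk of diameter $ab$), so its intersection with the open upper half plane lies entirely in $\{x<0\}$, strictly on $a$'s side of the perpendicular bisector of $ab$; a standard convexity argument about the upper chain then converts this into the cyclic-order conclusion. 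Both proofs hinge on the same characterization of caps ($\mangle acb\geq\pi/2$ for interior points, i.e., $Q$ contained in the disk with diameter $ab$). Your version buys a slightly stronger and more uniform statement (every witness of an edge incident to $a$ lies strictly on $a$'s side of the bisector of $ab$) and avoids the mildly delicate inequality $|yc|\geq|ya|$, which in the paper depends on the assumed position of $y$; the paper's version is shorter and synthetic. One small point you should make explicit: ruling out that the witness $x$ lies on the line $ab$ requires excluding not only a third collinear vertex but also $x\in\{a,b\}$; both are immediate ($a$ is not equidistant from $a$ and $c$, and your own computation shows the bisector of $ac$ meets the $x$-axis at a non-positive abscissa, so it misses $b$), but as written the parenthetical ``no three collinear'' covers only the first possibility.
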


\begin{proof}
Assume without loss of generality that $ab$ is horizontal, $a$ is to the left of $b$, and $Q$ lies above the line $ab$. Assume to the contrary that $y$ lies between $a$ and $x$, or that $y=x$. We know already
that $x$ lies between $a$ and $c$. We have $|yc|\geq |ya|=|yb|$. Therefore, we have $\mangle ycb < \pi/2$. However, we know that $\mangle acb < \mangle ycb$, contradicting the fact that $\mangle acb \geq \pi/2$, as $Q$ is a cap.
\end{proof}

\begin{corollary}[\cite{Du06}]\label{corollary:half}
Let $Q$ be a cap consisting of $t$ points. Then there are at most $\frac{1}{4}{t^2}$ edges in $Q$ that have a witness in $Q$.
\end{corollary}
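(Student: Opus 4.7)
The plan is to bound, for each vertex $v\in Q$, the number of edges in $Q$ for which $v$ serves as a witness, and then sum over $v$. Since every edge has at most one witness, this yields the desired upper bound on the total number of edges in $Q$ with a witness in $Q$. Write $Q=\{x_1,\ldots,x_t\}$ in cap order.

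The key geometric input is a two-sided monotonicity of distances from each vertex. By property~(2) above, the subset $\{x_1,\ldots,x_k\}$ is itself a cap, with endpoints $x_1$ and $x_k$; applying property~(1) to this sub-cap from the endpoint $x_k$ gives
$$|x_kx_{k-1}|<|x_kx_{k-2}|<\cdots<|x_kx_1|.$$
A symmetric argument applied to the sub-cap $\{x_k,\ldots,x_t\}$ yields
$$|x_kx_{k+1}|<|x_kx_{k+2}|<\cdots<|x_kx_t|.$$
Thus from $x_k$, each positive distance is attained by at most one point on either side of $x_k$ in the cap order.

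Now if $x_k$ is a witness for an edge $\{x_i,x_j\}\subset Q$, then $x_k$ lies strictly between $x_i$ and $x_j$ in the cap order, so $i<k<j$, and $|x_kx_i|=|x_kx_j|$ by definition. By the monotonicity above, each positive radius $r$ determines at most one index $i\in\{1,\ldots,k-1\}$ with $|x_kx_i|=r$ and at most one $j\in\{k+1,\ldots,t\}$ with $|x_kx_j|=r$. Consequently $x_k$ witnesses at most $\min(k-1,\,t-k)$ edges of $Q$, and summing over $k$, the number of edges of $Q$ with a witness in $Q$ is at most
$$\sum_{k=1}^t\min(k-1,\,t-k)\;\le\;\frac{t^2}{4},$$
by a routine arithmetic calculation. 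I do not foresee a serious obstacle. The one point worth flagging is that Lemma~\ref{lemma:monotone} is not actually needed in its full strength; the plain distance monotonicity within a cap (property~(1)), applied on each side of $x_k$ separately, is what produces the factor $\tfrac14$. Applying Lemma~\ref{lemma:monotone} only on one side of each left endpoint would instead give the weaker bound $\tfrac{t^2}{2}$.
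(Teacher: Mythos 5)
Your proof is correct, and it follows the same overall counting scheme as the paper's: bound the number of edges witnessed by each $x_k$ by $\min(k-1,\,t-k)$ and sum over $k$. The difference lies in how that per-vertex bound is justified. The paper invokes Lemma~\ref{lemma:monotone} to conclude that no two edges of $Q$ sharing a common vertex can have the same witness, so the edges witnessed by $x_k$ form a matching between $\{x_1,\ldots,x_{k-1}\}$ and $\{x_{k+1},\ldots,x_t\}$, which has size at most $\min(k-1,\,t-k)$. You instead derive the same matching structure from the elementary distance monotonicity of property~(1), applied to the two sub-caps $\{x_1,\ldots,x_k\}$ and $\{x_k,\ldots,x_t\}$: each radius is realized at most once on each side of $x_k$, so distinct witnessed edges must have distinct endpoints on each side. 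Your closing observation is accurate: the corollary only needs the ``in particular, $x\neq y$'' consequence of Lemma~\ref{lemma:monotone}, and your route recovers exactly that from the basic cap property, making the argument slightly more self-contained. The trade-off is organizational rather than substantive --- the paper needs the full strength of Lemma~\ref{lemma:monotone} (the ordering of witnesses, not merely their distinctness) elsewhere, e.g.\ in Lemma~\ref{cor:half_easy} and in the proof of Theorem~\ref{theorem:ma}, so once that lemma is established the corollary comes essentially for free.
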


\begin{proof}
Denote the points of $Q$ in cyclic order by $x_{1}, x_2, \ldots, x_{t}$.
By Lemma \ref{lemma:monotone}, no two edges of $Q$ that share a common vertex can have the same witness. Therefore, $x_{i}$ can witness at most $\min (i-1, n-i)$ edges in $Q$. Hence, the number of edges in $Q$ with a witness in $Q$ is at most $2(1+2+ \ldots + \lceil \frac{t}{2}-1 \rceil) \leq \frac{1}{4}t^2$.
\end{proof}

\begin{corollary}\label{corollary:dumitrescu}{\rm \cite{Du06}}
Let $P$ be a set of $n$ points in convex position in the plane.
Then $P$ has at least $\frac{n^2}{12}$ good edges.
\end{corollary}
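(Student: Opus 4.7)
The plan is to combine Moser's decomposition with Corollary~\ref{corollary:half}. The smallest enclosing circle of $P$ partitions $P$ into at most three caps $Q_1, Q_2, Q_3$ that pairwise share only their endpoints; writing $t_i = |Q_i|$, we have $\sum_i t_i \le n+3$. I will show that each cap $Q_i$ contains at most $t_i^2/4$ \emph{in-cap} bad edges (bad edges $\{c, d\}$ with $c, d \in Q_i$). Granting this, the number of good in-cap edges is at least
$$\sum_i \left[\binom{t_i}{2} - \frac{t_i^2}{4}\right] = \sum_i \frac{t_i^2}{4} - \sum_i \frac{t_i}{2};$$
by convexity $\sum_i t_i^2 \ge (\sum_i t_i)^2/3$, so this is at least $\frac{(n+3)^2}{12} - \frac{n+3}{2} = \frac{(n+3)(n-3)}{12}$, which is $n^2/12$ up to an additive constant.

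The heart of the argument is the following geometric claim: for any cap $Q$ with endpoints $a, b$, every bad edge $\{c, d\} \subset Q$ admits a witness in $Q$. By Moser's construction, every point of $P$ on the cap side of line $ab$ (including line $ab$ itself) belongs to $Q$, so it suffices to exhibit at least one apex of $\{c,d\}$ on the cap side. Suppose for contradiction that two distinct apices $p_1, p_2 \in P$ lie strictly on the anti-cap side. Let $\ell$ be the perpendicular bisector of $cd$ and $m$ its midpoint; since $m$ is on the cap side (or on $ab$) while $p_1, p_2$ are strictly on the anti-cap side, $\ell$ is not parallel to $ab$, and its anti-cap portion is a ray. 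Because $\ell \perp cd$ at $m$, this ray lies entirely on one side of line $cd$; hence $p_1, p_2$ are on the same side of $cd$. But then the nearer of $p_1, p_2$ to $m$ lies strictly inside the triangle formed by $c$, $d$, and the farther apex, contradicting the fact that $\{c, d, p_1, p_2\} \subset P$ is in convex position.

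Once this claim is in hand, Corollary~\ref{corollary:half} yields the desired per-cap bound: every bad in-cap edge of $Q_i$ is witnessed in $Q_i$, and the number of edges of $Q_i$ with a witness in $Q_i$ is at most $t_i^2/4$. The main obstacle is the geometric claim above, specifically ruling out both apices on the anti-cap side; the remaining steps (Moser's decomposition, Cauchy--Schwarz, and the direct invocation of Corollary~\ref{corollary:half}) are routine.
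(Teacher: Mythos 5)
Your proof is correct and follows the same route as the paper's: Moser's three-cap decomposition, Corollary~\ref{corollary:half}, and Cauchy--Schwarz. The one substantive addition is that you explicitly prove the step the paper leaves implicit---that every bad edge inside a Moser cap must have a witness in that cap, i.e.\ that the two apices cannot both lie strictly on the anti-cap side of the line $ab$---and your argument for that claim is sound; the only blemish is the harmless additive slack you already acknowledge (your count comes to $\frac{(n+3)(n-3)}{12}$ rather than $\frac{n^2}{12}$, and your ``$\sum_i t_i \le n+3$'' should really be an equality so that the monotonicity used in the final estimate is legitimate), a lower-order issue that the paper's own computation shares and that is irrelevant to every application of the corollary.
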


\begin{proof}
The smallest enclosing circle $C$ of $P$ passes through (at most) 3 points $a,b,c\in P$ (possibly not all distinct) such that each of the  arcs delimited by them is at most a semi-circle. Thus, $a,b,$ and $c$ divide $P$ into at most 3 caps.

If $a,b,c$ are distinct, let $r,s$, and $t$ denote the number of points in these caps, where $r+s+t=n+3$. By Corollary \ref{corollary:half} and the Cauchy--Schwarz inequality, the total number of good edges completely contained in one of the caps is at least
$$\frac{1}{4}(r^2+s^2+t^2)\ge \frac{1}{4} 3\frac{n^2}{9}=\frac{n^2}{12}.$$
If $b=c$, say, we obtain an even better lower bound.
\end{proof}

In order to complete Dumitrescu's argument, notice that if $xy$ is a good edge in $P$, then there is at most one isosceles triangle with base $xy$. Thus, we have
$$Z(P) \leq 2\binom{n}{2}-\#\{\mbox{good edges}\}.$$
According to Corollary~\ref{corollary:dumitrescu}, this implies $Z(P) < (11/12)n^2$. Plugging this bound into Lemma~\ref{lemma:improvement}, Dumitrescu obtained (\cite{Du06}) that $P$ determines at least
$(13/36)n-O(1)$ distinct distances.

Our improvement on Dumitrescu's argument is as follows:

\begin{theorem}\label{theorem:ma}
Let $P$ be a set of $n$ points in convex position. Then $P$ has at least $\alpha n^2$ good edges, where $\alpha=1/11.981$, and therefore, $Z(P) \le (10.981/11.981) n^2$.
\end{theorem}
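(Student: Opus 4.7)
The plan is to refine the cap decomposition used in Corollary~\ref{corollary:dumitrescu}, which yields only $n^2/12$ good edges. That bound arises by summing the at-least-$t_i^2/4$ good edges inside each of the three caps $Q_1, Q_2, Q_3$ carved out by the smallest enclosing circle (with sizes $t_1+t_2+t_3=n+3$) and applying Cauchy--Schwarz. Since Cauchy--Schwarz is tight precisely when $t_1=t_2=t_3\approx n/3$, any improvement to $\alpha=1/11.981$ must be extracted from this balanced regime.

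First, I would establish in Section~\ref{sec:new} three auxiliary geometric lemmas refining the witness analysis of Lemma~\ref{lemma:monotone} and Corollary~\ref{corollary:half}. Their aim is to guarantee strictly more than $t^2/4$ good edges in a cap of size $t$ (or, alternatively, extra good edges straddling two neighbouring caps) under mild non-degeneracy assumptions. A natural candidate is a strengthened monotonicity statement: while Lemma~\ref{lemma:monotone} shows that no two cap-edges sharing a common vertex can have the same witness, one should prove that the upper envelopes $\min(i-1, t-i)$ used in the proof of Corollary~\ref{corollary:half} cannot be saturated simultaneously at many indices $i$. Saving $\Omega(t)$ edges per cap translates, for caps of size $\Theta(n)$, into an $\Omega(n^2)$ improvement over Dumitrescu's count.

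The main argument in Section~\ref{sec:goodedges} would then dichotomise on the cap sizes $t_i$. Fix a small threshold $\delta > 0$: if some $t_i \geq (1+\delta) n/3$, then Corollary~\ref{corollary:half} applied to that single cap already produces more than $\tfrac{1}{4}\bigl((1+\delta) n/3\bigr)^2$ good edges, which exceeds $\alpha n^2$ for a suitable choice of $\delta$. Otherwise, all three caps have size within $\delta n/3$ of $n/3$, and the refined lemmas of Section~\ref{sec:new} kick in to add an extra $\beta n^2$ good edges to Dumitrescu's bound. Optimising $\delta$ so that both regimes match yields the announced $\alpha = 1/11.981$. The conversion to $Z(P)$ is then immediate from the observation preceding Corollary~\ref{corollary:dumitrescu}: each good edge is the base of at most one isosceles triangle, hence $Z(P) \le 2\binom{n}{2} - \alpha n^2 \le (10.981/11.981)n^2$ once $n$ is sufficiently large.

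The principal obstacle lies in the auxiliary step: the $t^2/4$ bound of Corollary~\ref{corollary:half} is nearly attained by arcs of a regular polygon, so any improvement must exploit constraints beyond convex position. The key point will be that the caps sit inside the smallest enclosing circle of the entire point set, and I expect the heart of the proof to be a careful geometric analysis of how perpendicular bisectors of cap edges interact with the chords separating the three caps and with points of the other caps.
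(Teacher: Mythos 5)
Your overall frame (refine the cap decomposition of Corollary~\ref{corollary:dumitrescu}, dichotomize, then convert good edges into a bound on $Z(P)$) is reasonable, and the final conversion $Z(P)\le 2\binom{n}{2}-\alpha n^2$ is exactly right. But the heart of your plan --- auxiliary lemmas showing that a single cap of size $t$ contains $t^2/4+\Omega(t^2)$ good edges, i.e.\ that the envelope $\min(i-1,t-i)$ in Corollary~\ref{corollary:half} cannot be saturated at many indices --- cannot be made to work, for the reason you yourself flag at the end: $t$ equally spaced points on a short circular arc form a cap in which $x_j$ witnesses \emph{every} edge $x_{j-k}x_{j+k}$, so the envelope is saturated at every index and the cap has only $\binom{t}{2}-\frac14 t^2+O(t)$ witness-free edges; no constraint coming from the enclosing circle changes this internal count. (There is also an arithmetic slip: saving $\Omega(t)$ edges per cap with $t=\Theta(n)$ yields only $\Omega(n)$ extra edges, not $\Omega(n^2)$; you would need savings at $\Omega(t)$ indices.) Consequently your balanced case has no argument behind it; and your unbalanced case, as stated for a single cap, needs $t_i\ge 0.578n$ before that cap alone beats $\alpha n^2$, while using all three caps the gain is only $\Theta(\delta^2n^2)$ --- workable, but it leaves the entire burden on the balanced regime.

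The paper's proof extracts the extra good edges from elsewhere rather than improving the per-cap count. It repeatedly deletes the (at most three) points lying on the smallest enclosing circle, producing $P=P_1\supset P_2\supset\cdots\supset P_{dn}$, and splits into two cases according to whether the contact points of the successive smallest enclosing circles all stay within circular distance $an$ of the original three, or some contact point migrates far away. If they stay close, Lemma~\ref{cor:half_easy} supplies roughly $n-O(i)$ good edges adjacent to the three points deleted at step $i$, and summing over $i$ gives an extra $\Theta(dn^2)$ beyond the $n^2/12$ of Corollary~\ref{corollary:dumitrescu}. If at some step a contact point $x_i$ migrates far, the $2an$ points around $x_i$ lie inside one original cap but straddle two caps of the new partition, and Lemma~\ref{lemma:sequence} --- which rests on the new geometric Lemma~\ref{lemma:tech}, the ingredient your proposal does not anticipate --- shows that at most $\frac78t^2+O(t)$ of the straddling edges can have witnesses, yielding $\Theta(a^2n^2)$ extra good edges disjoint from the internal ones already counted. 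So the paper's dichotomy is on the migration of the enclosing-circle contact points, not on cap sizes, and the new good edges come from edges between caps or adjacent to extreme points, never from inside a single cap.
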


Theorem~\ref{theorem:ma} and Lemma~\ref{lemma:improvement} yield
\begin{equation*}
f_{\rm conv} (n) \ge \left(\frac{13}{36} + \frac{1}{22701}\right) n - O(1),
\end{equation*}
proving Theorem~\ref{thm:main}.

\section{Three lemmas on witnesses}\label{sec:new}

To improve the lower bound $f_{\rm conv}(n) \geq \frac{13}{36}n$, we need a couple of auxiliary results. The first such statement is a simple consequence of Lemma \ref{lemma:monotone}.

\begin{lemma}\label{cor:half_easy}
Let $Q$ be a cap of size $t$ with endpoints $a$ and $b$. Then the total number of edges adjacent to $a$ and $b$ with no witness in $Q$ is at least $t-1$.
\end{lemma}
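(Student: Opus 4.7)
The plan is to reduce the lemma to an injectivity statement. Write $Q = \{x_1 = a, x_2, \dots, x_t = b\}$ in cyclic (cap) order. The set $E$ of edges incident to $\{a,b\}$ consists of the $t-1$ edges $ax_i$ for $2 \le i \le t$ and the $t-1$ edges $bx_l$ for $1 \le l \le t-1$, with $ab$ counted once, so $|E| = 2t-3$. I would show that the map $w$ that sends each edge of $E$ with a witness in $Q$ to that witness is injective into $Q \setminus \{a,b\}$; since $|Q \setminus \{a,b\}| = t-2$, this gives at most $t-2$ witnessed edges in $E$ and hence at least $(2t-3)-(t-2) = t-1$ edges with no witness in $Q$, as desired.

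To establish injectivity, suppose $x_k \in Q \setminus \{a,b\}$ witnesses two distinct edges $e_1, e_2 \in E$. Recall that the witness of any edge $cd$ must lie strictly between $c$ and $d$ in the cyclic order from $a$ to $b$; hence witnessing $ax_i$ forces $k<i$ and witnessing $bx_l$ forces $l<k$. The main tool will be property~1 of caps, applied separately to the two subcaps meeting at $x_k$: the subcap $\{x_k, x_{k+1}, \dots, x_t\}$ (endpoints $x_k, b$) and the subcap $\{x_1, \dots, x_k\}$ (endpoints $a, x_k$). These yield the strict chains $|x_k x_{k+1}| < \cdots < |x_k x_t| = |x_k b|$ and $|x_k x_{k-1}| < \cdots < |x_k x_1| = |x_k a|$.

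Now proceed by cases on the types of $e_1, e_2$. If both are $a$-edges, say $ax_{i_1}$ and $ax_{i_2}$ with $k < i_1 < i_2 \le t$, the first chain yields $|x_k x_{i_1}| < |x_k x_{i_2}|$, contradicting $|x_k a| = |x_k x_{i_1}| = |x_k x_{i_2}|$; the case of two $b$-edges is symmetric. The remaining case is $e_1 = ax_i$ and $e_2 = bx_l$ with $l < k < i$. Here, combining $|x_k a| = |x_k x_i| \le |x_k b|$ (from the first chain, with equality iff $i = t$) with $|x_k b| = |x_k x_l| \le |x_k a|$ (from the second chain, with equality iff $l = 1$) forces both equalities, so $i = t$ and $l = 1$; but then $e_1 = ax_t = ab = bx_1 = e_2$, contradicting distinctness.

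The only conceptual obstacle is recognizing that property~1 of caps can be invoked on both sides of $x_k$, and that the two resulting inequalities pinch $|x_k a| = |x_k b|$, pinning the hypothetical pair of distinct edges down to the single edge $ab$. Once this observation is in place, the case analysis closes immediately and the lemma follows.
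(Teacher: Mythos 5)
Your proof is correct, and it follows the same counting skeleton as the paper's ($2t-3$ edges incident to $a$ or $b$, an injective witness map into the $t-2$ interior points, hence at least $t-1$ unwitnessed edges), but the way you establish injectivity is genuinely different in its key step. The paper invokes Lemma~\ref{lemma:monotone}: it takes the witness $x$ of $ab$ (adding an artificial such point if none exists) as a separator, so that witnesses of edges at $a$ lie between $a$ and $x$ while witnesses of edges at $b$ lie between $b$ and $x$, and combines this with the already-known fact that no point witnesses two edges sharing a vertex. You instead work directly from property~1 of caps applied to the two subcaps on either side of a hypothetical double witness $x_k$: the two monotone distance chains kill the case of two $a$-edges (or two $b$-edges) outright, and in the mixed case they pinch $|x_k a| \le |x_k b| \le |x_k a|$, forcing both edges to degenerate to $ab$. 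Your route is more self-contained --- it bypasses Lemma~\ref{lemma:monotone} and the slightly awkward device of adjoining a phantom witness for $ab$ --- at the cost of an explicit three-way case analysis; the paper's version is shorter on the page because it reuses machinery already set up for Corollary~\ref{corollary:half}. Both arguments are sound and yield the identical bound.
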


\begin{proof}
Let $x$ be the witness in $Q$ for the edge $ab$, if it exists; otherwise add such a point
$x$ keeping $Q \cup\{x\}$ in convex position.

By Lemma \ref{lemma:monotone}, all witnesses to edges adjacent to $b$
are between $b$ and $x$, while all witnesses for edges adjacent to $a$
are between $a$ and $x$. We know already that a point in $Q$ can be a witness
for at most one edge adjacent to $a$ and at most one edge adjacent to $b$.
We conclude that every point in $Q \setminus \{a,b\}$ may be a witness for
at most one edge adjacent to $a$ or to $b$.
As there are $2t-3$ edges in total that are adjacent to $a$ or to $b$,
there must be at least $2t-3-(t-2)=t-1$ of them with no witness in $Q$.
\end{proof}

The following geometric lemma, which is of independent interest, will be a crucial element of our proof.

\begin{lemma}\label{lemma:tech}
Let $Q = \{a,b,c,d,e\}$ be a cap, with the points appearing in clockwise order, such that $c$ is a witness for $ae$ and $d$ is a witness for $be$. Then $|ab|>|cd|$.
\end{lemma}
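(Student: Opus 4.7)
The strategy begins by exploiting the circumcircle of $\triangle abe$. Since $|ca|=|ce|$ and $|db|=|de|$, the perpendicular bisectors $\ell_1$ of $ae$ (containing $c$) and $\ell_2$ of $be$ (containing $d$) both pass through the circumcenter $O$ of $\triangle abe$, meeting at an angle equal to $\theta := \angle aeb$. The extended law of sines gives $|ab|=2R\sin\theta$, where $R$ is the circumradius. Placing $O$ at the origin with $\ell_2$ along the positive $x$-axis yields $c=r_c(\cos\theta,\sin\theta)$ and $d=(r_d,0)$ for signed parameters $r_c,r_d$, and a short computation gives the clean identity
\[
|ab|^2-|cd|^2 \;=\; (4R^2-r_c^2)\sin^2\theta \;-\; (r_d-r_c\cos\theta)^2,
\]
so the lemma reduces to the bound $(r_d-r_c\cos\theta)^2<(4R^2-r_c^2)\sin^2\theta$.

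A more geometric route uses the spiral similarity $\rho$ centered at $e$ with $\rho(a)=b$: it has ratio $\lambda=|be|/|ae|$ and rotation angle $\gamma=\angle aeb$. Because $\rho$ preserves midpoints and perpendicularity, it carries $\ell_1$ onto $\ell_2$, so $c':=\rho(c)$ lies on $\ell_2$. Writing $c'-e=\lambda R_\gamma(c-e)$ and expanding the square gives
\[
|cc'| \;=\; |ce|\,\sqrt{1-2\lambda\cos\gamma+\lambda^2} \;=\; \frac{|ce|}{|ae|}\,|ab|.
\]
By property~1 of caps applied at the endpoint $e$ we have $|ec|<|ea|$, so $|cc'|<|ab|$. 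It thus suffices to prove $|cd|\leq|cc'|$: since both $c'$ and $d$ lie on $\ell_2$, letting $c^\ast$ be the foot of the perpendicular from $c$ to $\ell_2$, this reduces to $|c^\ast d|\leq|c^\ast c'|$.

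The main obstacle is verifying this last step: that $d$ sits on $\ell_2$ no farther from $c^\ast$ than $c'$ does. The relevant cap conditions on $d$---that $d$ lies in the disk with diameter $ae$ containing the whole cap, that $d$ lies in the disk with diameter $be$ (from $\angle bde\geq\pi/2$), and that the pentagon $a,b,c,d,e$ is in strictly convex clockwise position with $d$ on the outer side of line $ce$---should together confine $d$'s position along $\ell_2$ to a segment of length at most $2|c^\ast c'|$ centered at $c^\ast$. A careful argument, possibly leveraging Lemma~\ref{lemma:monotone} to compare the positions of the witnesses $c$ and $d$, should close the inequality.
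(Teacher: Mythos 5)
Both of your routes stop exactly where the real work begins. The coordinate identity in the first paragraph merely restates the lemma as the inequality $(r_d-r_c\cos\theta)^2<(4R^2-r_c^2)\sin^2\theta$ and never establishes it. The spiral-similarity computation in the second paragraph is correct as far as it goes ($c'=\rho(c)$ does lie on $\ell_2$ and $|cc'|=|ab|\,|ce|/|ae|<|ab|$), but the step you defer, namely $|cd|\le|cc'|$, is a \emph{strictly stronger} claim than the lemma itself (you would be proving $|cd|\le|ab|\,|ce|/|ae|$ with $|ce|/|ae|<1$), and it is in fact \textbf{false}. Take $a=(0,0)$, $b=(3-\sqrt{3},0)$, $c=(2,0)$, $d=(3,0)$, $e=(3,-\sqrt{3})$. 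Then $|ca|=|ce|=2$ and $|db|=|de|=\sqrt{3}$, so $c$ and $d$ are witnesses for $ae$ and $be$; the cap conditions $\mangle axe\ge\pi/2$ hold for $x=b,c,d$ (with equality at $d$), and a small perturbation of $b$ and $c$ off the line (rotating $b$ about $d$ and sliding $c$ along the bisector of $ae$, which preserves both witness equalities) makes the five points strictly convex while changing all quantities by $O(\varepsilon)$. Here $|ab|=3-\sqrt{3}\approx1.268$ and $|cd|=1$, so the lemma holds, but $|cc'|=|ab|\,|ce|/|ae|=(3-\sqrt{3})/\sqrt{3}=\sqrt{3}-1\approx0.732<|cd|$. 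So no amount of "careful argument" will confine $d$ to within $|c^\ast c'|$ of $c^\ast$; the intermediate target is simply unattainable.

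The difficulty you are deferring---pinning down where $d$ can sit on $\ell_2$ given the cap and convexity constraints---is precisely the content of the paper's proof: it first normalizes $b$ onto the segment $ac$ (sliding $b$ along the circle about $d$ only decreases $|ab|$), then slides $d$ along $\ell_2$ to maximize $|cd|$, which forces $d$ into one of three extremal positions (on $ce$, on line $ac$, or on the circle with diameter $ae$), each handled by a short trigonometric or algebraic computation. Note that your counterexample above is essentially the paper's third extremal case, which is also where their argument requires the explicit algebraic estimate $\sqrt{2+2x}-\sqrt{2-2x}>\sqrt{2-2\sqrt{1-x^2}}$. If you want to salvage the spiral-similarity idea, you would at minimum need to replace the target $|cc'|$ by something that survives this configuration, or carry out an extremal-case reduction of the same kind as the paper's.
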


\begin{proof}
First, we can assume without loss of generality that $b$ lies on the segment $ac$;\footnote{Then $Q$ is not in convex position anymore, but only in weakly convex position. But, as we show, even with this relaxation the claim holds.} for otherwise, we can slide $b$ counterclockwise along the circle centered at $d$ passing through $b$, until $b$ reaches $ac$, and this only decreases $|ab|$ (in fact, $|ab|$ keeps decreasing until $b$ reaches the segment $ad$).

Next, let $o$ be the midpoint of $ae$, let $C$ be the circle centered at $o$ passing through $a$ and $e$, and let $\ell$ be the line passing through $d$ perpendicular to $be$. Without loss of generality we can slide $d$ along $\ell$ either inwards or outwards, making sure $Q$ is still a cap, so as to maximize $|cd|$. Then, $d$ falls in one of these three cases (see Figure~\ref{fig_tech_3cases}):

\begin{figure}
\centerline{\includegraphics{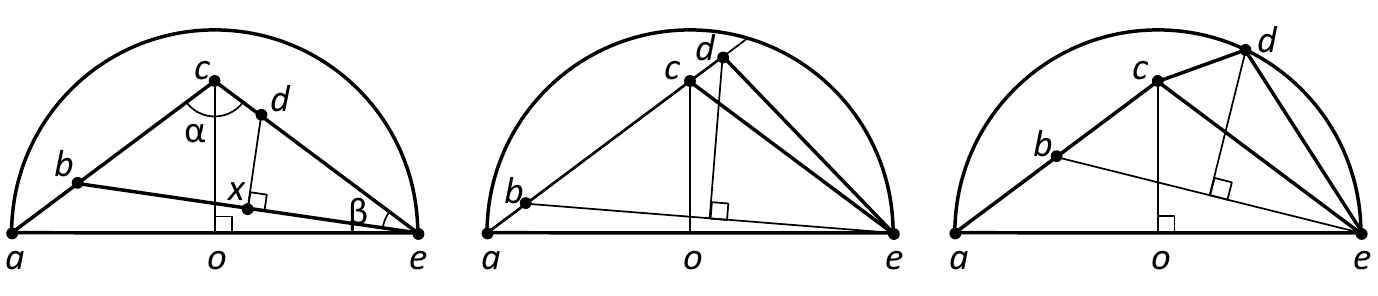}}
\caption{\label{fig_tech_3cases}The three cases for Lemma~\ref{lemma:tech}.}
\end{figure}

\begin{enumerate}
\item $d$ lies on $ce$.
\item $d$ lies on the line through $a$ and $c$.
\item $d$ lies on $C$.
\end{enumerate}

Suppose the first case. Let $x = \ell \cap be$, and let $\alpha = \mangle ace$ and $\beta = \mangle bec$. Then, by applying the law of sines on the triangle $bce$ and considering the right-angled triangle $xde$, we get $|bc| / \sin \beta = |be| / \sin \alpha = 2|de|\cos\beta/\sin\alpha$, so $|bc| = |de|\sin(2\beta)/\sin\alpha = |de|\sin(2\beta)/\sin(\pi-\alpha)$. But $\pi-\alpha = 2\mangle cea > 2\beta$, which implies that $|bc| < |de|$, or equivalently $|ab| > |cd|$.

Now suppose the second case. Then $|ab|>|cd|$ is equivalent to $|ac|>|bd|$. But $|ac|=|ce|$ and $|bd|=|de|$. Furthermore, $|ce|>|de|$ since $\mangle cde \ge \pi/2$ in the triangle $cde$, so we are done.

The third case is divided into two subcases, according to whether $d$ lies higher or lower than $c$ (meaning, whether $\mangle dco$ is obtuse or acute).

If $d$ lies higher than $c$, then without loss of generality we can move $c$ down towards $o$, and move $b$ counterclockwise along the circle centered at $d$, until both $c$ and $b$ reach the segment $ad$. This only decreases $|ab|$ and increases $|cd|$, and we fall back into case 2.

Finally, suppose that $d$ lies on $C$ but lower than $c$. Let $b'$ be the point along $ad$ satisfying $|b'd| = |bd|$; and let $c'$ be the intersection point of $C$ and the ray $\overrightarrow{oc}$. Note that $|ab| \ge |ab'|$ and $|cd| \le |c'd|$. We show algebraically that $|ab'| > |c'd|$, which proves our claim:

Suppose without loss of generality that $o$ is the origin and $C$ has radius $1$. Let $d = (x, \sqrt{1-x^2})$ for some real number $0<x<1$. Then $|ab'| = |ad| - |b'd| = \sqrt{2+2x} - \sqrt{2-2x}$, while $|c'd| = \sqrt{2-2\sqrt{1-x^2}}$, and a routine algebraic calculation shows that $|ab'| > |c'd|$ for all $0<x<1$.
\end{proof}

\begin{lemma}\label{lemma:sequence}
Let $Q = \{x_1, x_2, \ldots, x_{2t}\}$ be a cap with the points appearing in that order.
Then the number of edges between the sets
$\{x_{1}, x_2, \ldots, x_{t}\}$ and $\{x_{t+1},x_{t+2}, \ldots, x_{2t}\}$ that have a witness in $Q$ is at most $\frac{7}{8}t^2 + O(t)$.
\end{lemma}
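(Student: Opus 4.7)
Let $W\subseteq A\times B$ (with $A=\{x_1,\ldots,x_t\}$, $B=\{x_{t+1},\ldots,x_{2t}\}$) denote the set of edges with a witness in $Q$, and for $(j,k)\in W$ write $i=i(j,k)\in\{j+1,\ldots,k-1\}$ for the witness. Applying Lemma~\ref{lemma:monotone} with the cap endpoints $x_1$ and $x_{2t}$ in turn shows that $i(j,k)$ is strictly monotone increasing in each of its two arguments. Charging each edge of $W$ to its witness $x_i$ therefore yields the trivial bound
\[
|W|\;\le\;\sum_{i=2}^{t}(i-1)\;+\;\sum_{i=t+1}^{2t-1}(2t-i)\;=\;2\binom{t}{2}\;=\;t^{2}-t,
\]
so the task is to save an additional $\tfrac{1}{8}t^{2}-O(t)$ using Lemma~\ref{lemma:tech}.

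Split $W=W_{A}\sqcup W_{B}$ according to whether $i(j,k)\in A$ or $i(j,k)\in B$; by the left--right symmetry of the cap it suffices to bound $|W_{B}|$. Fix $k\in B$ and enumerate the edges of $W_{B}$ incident to $k$ as $(j_{1},k),\ldots,(j_{s},k)$ with $j_{1}<\cdots<j_{s}$ in $A$ and witnesses $i_{1}<\cdots<i_{s}$ all in $B$. Since $j_{\beta}\le t<i_{\alpha}$ for every $\alpha<\beta$, the five-point subset $(x_{j_{\alpha}},x_{j_{\beta}},x_{i_{\alpha}},x_{i_{\beta}},x_{k})$ satisfies the hypothesis of Lemma~\ref{lemma:tech}, which gives
\[
|x_{j_{\alpha}}x_{j_{\beta}}|\;>\;|x_{i_{\alpha}}x_{i_{\beta}}|\qquad\text{for all }1\le\alpha<\beta\le s.
\]
Hence, in a precise metric sense, the $j_{\alpha}$'s are more spread out inside $A$ than the $i_{\alpha}$'s are inside $B$.

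To turn this family of inequalities into a count, my plan is to partition $Q$ into four consecutive blocks $Q_{1},Q_{2},Q_{3},Q_{4}$ of size $t/2$ each, so that $A=Q_{1}\cup Q_{2}$, $B=Q_{3}\cup Q_{4}$, and the $A$--$B$ edges split into the four bipartite pieces $Q_{i}\times Q_{j}$ ($i\in\{1,2\}$, $j\in\{3,4\}$), each of size $(t/2)^{2}$. The three ``corner'' pieces $Q_{1}\times Q_{3}$, $Q_{1}\times Q_{4}$, $Q_{2}\times Q_{4}$ contribute at most $(t/2)^{2}$ witnessed edges each, trivially. For the ``middle'' piece $Q_{2}\times Q_{3}$ every witness is forced to lie inside the sub-cap $Q_{2}\cup Q_{3}$ of size $t$, and one combines Lemma~\ref{lemma:tech} with Corollary~\ref{corollary:half} applied to this sub-cap to obtain an upper bound of $\tfrac{1}{2}(t/2)^{2}+O(t)=\tfrac{1}{8}t^{2}+O(t)$ witnessed middle-block edges. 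Summing the four contributions yields $3\cdot\tfrac{t^{2}}{4}+\tfrac{t^{2}}{8}+O(t)=\tfrac{7}{8}t^{2}+O(t)$, as required.

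The main difficulty is establishing the $\tfrac{1}{2}$-density bound on the middle block. Corollary~\ref{corollary:half} applied on its own to the sub-cap $Q_{2}\cup Q_{3}$ gives only $\tfrac{1}{4}t^{2}$, which is exactly the size of the middle block and is therefore useless alone. Cutting this in half requires invoking Lemma~\ref{lemma:tech} to show that witnesses cannot be packed so densely inside $Q_{2}\cup Q_{3}$; the natural route is a charging argument in which each witnessed middle-block edge $(j,k)$ is paired, via the metric inequality above, with a second distinct witnessed edge sharing either the endpoint $k$ or the witness $i$, and one then argues that the monotonicity of Lemma~\ref{lemma:monotone} prevents these paired edges from overlapping.
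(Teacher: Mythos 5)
There is a genuine gap: the entire $\tfrac18 t^2$ saving in your accounting rests on the claim that the middle block $Q_2\times Q_3$ contains at most $\tfrac12(t/2)^2+O(t)$ witnessed edges, and you never prove this claim --- you yourself flag it as ``the main difficulty'' and then only gesture at ``the natural route'' of a charging argument. Without it, your argument delivers only the trivial bound $t^2-O(t)$ (the three corner blocks are bounded trivially by their full size, so they contribute no saving whatsoever). Note also that the obvious fallback of applying Corollary~\ref{corollary:half} or the lemma itself recursively to the sub-cap $Q_2\cup Q_3$ does not work: Corollary~\ref{corollary:half} gives $\tfrac14 t^2$, the full size of the block, and a recursive application of the lemma would give $\tfrac78(t/2)^2=\tfrac{7}{32}t^2$, still far above the needed $\tfrac{4}{32}t^2$; indeed the recursion $f(t)\le 3(t/2)^2+f(t/2)$ only resums to $t^2$. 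So the unproven middle-block statement is essentially a $\tfrac12$-density claim that is at least as strong as the lemma itself, and it is not clear it is even true: the extremal configurations of Lev and Pinchasi achieve overall density $3/5>1/2$, so a uniform-density heuristic would actually violate your claim.

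Your derivation of the inequality $|x_{j_\alpha}x_{j_\beta}|>|x_{i_\alpha}x_{i_\beta}|$ from Lemma~\ref{lemma:tech} is correct and is exactly the right geometric input; what is missing is the combinatorial mechanism that converts this family of metric inequalities into a count. For comparison, the paper does this by sorting the $2t-2$ consecutive segments $x_jx_{j+1}$ (omitting $x_tx_{t+1}$) by length as $s_1,\ldots,s_{2t-2}$ and proving, via Lemma~\ref{lemma:tech} applied to the two \emph{adjacent} endpoints of $s_i$ and a common neighbor $x_k$, that the endpoints of $s_i$ have at most $i$ common neighbors, whence $d(u_i)+d(v_i)\le t+i$; summing these degree bounds over all consecutive segments yields $4|E(G)|\le \tfrac72 t^2+O(t)$. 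That global summation over length-sorted consecutive segments is the step your proposal has no substitute for.
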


\begin{proof}
Let $G$ denote the geometric graph whose vertices are the points of $Q$ and whose edges are those edges whose number we wish to bound in the lemma. Consider the set of segments $\{x_ix_{i+1} \mid i\neq t , 1 \leq i \leq 2t-1\}$, and let $s_1, s_2, \ldots, s_{2t-2}$ denote these segments enumerated in increasing order of their
lengths (i.e., we have $|s_1| \le |s_2| \le \cdots \le |s_{2t-2}|$).
For every $1 \leq i \leq 2t-2$, denote by $u_i$ and $v_i$ the endpoints of $s_i$ so that $u_i = x_j$ and $v_i = x_{j+1}$ for some $j$.

We claim that $d(u_i) + d(v_i) \leq t+i$ for $i=1,2, \ldots, t$, where $d(v)$ is the degree of vertex $v$ in $G$.

Indeed, fix $1 \leq i \leq t$. Suppose without loss of generality that
$v_i = x_j$ for some $j \leq t$. Let $x_k$ be a vertex with $k>t$ such that
both $u_ix_k$ and $v_ix_k$ are in $G$. Let their witnesses be
$x_\ell$ and $x_{\ell'}$, respectively, with $\ell'>\ell$. Then, by
Lemma~\ref{lemma:tech} (taking $a=u_i$, $b=v_i$, $c=x_\ell$, $d=x_{\ell'}$, $e=x_k$), we have
$|x_{\ell}x_{\ell+1}| \le |x_{\ell} x_{\ell'}| < |s_i|$.
Therefore, either $\ell = t$ or $x_{\ell}x_{\ell+1} = s_{i'}$ for some $i'<i$. Obviously, there are only $i-1$ such segments $s_{i'}$. Furthermore, by Lemma~\ref{lemma:monotone}, different edges $u_ix_k$, $u_ix_{k'}$ must have different witnesses.
It follows that there can be at most $i$ vertices among $x_{t+1}, \ldots, x_{2t}$ that are connected to \emph{both} $u_i$ and $v_i$ in $G$, and therefore $d(u_i)+d(v_i) \le t+i$, as claimed.

Adding up over all segments, we obtain
\begin{align*}
4|E(G)| - 4t = 2\sum_{i=1}^{2t} d(x_i) - 4t &\le \sum_{i=1}^{2t} d(x_i) - \bigl( d(x_1)+d(x_t)+d(x_{t+1})+d(x_{2t}) \bigr)\\
&= \sum_{i=1}^{2t-2} (d(u_i)+d(v_i))\\
&\le \bigl((t+1) + \cdots + 2t \bigr) + 2t + \cdots + 2t = \frac{7}{2}t^2 - \frac{7}{2}t,
\end{align*}
and the lemma follows.
\end{proof}

The bound in Lemma~\ref{lemma:sequence} can be slightly improved, though we have abstained from doing so in the interest of simplicity, and because even a tight bound for the lemma would only yield a small improvement for $\varepsilon$ in Theorem~\ref{thm:main}. Lev and Pinchasi~\cite{LeP12} have shown that Lemma~\ref{lemma:sequence} cannot be improved beyond $\frac{3}{5}t^2 + O(t)$; see further discussion in Section~\ref{sec:last}.

\section{Proof of the main result}\label{sec:goodedges}

\begin{proof}[Proof of Theorem~\ref{theorem:ma}]
Let $p_1, \ldots, p_n$ be the points of $P$ in circular order. In this proof, by the \emph{circular distance} between two points $p_i,p_j \in P$ we mean the minimum of $(j-i)\bmod n$ and $(i-j)\bmod n$.

Fix two constants $0<a,d<1$ to be determined later. We will think of $d$ as much
smaller than $a$.
Perform the following $dn$ steps: Let $P_{1}=P$. At step $i$ choose the smallest
enclosing circle of $P_{i}$ and let $x_{i},y_{i},z_{i}$ be the three points of
$P_{i}$ that lie on this circle. (If the circle passes through only two points, let $x_i=y_i$.) Then let $P_{i+1}=P_{i} \setminus \{x_{i},y_{i},z_{i}\}$.

\begin{figure}
\centerline{\includegraphics{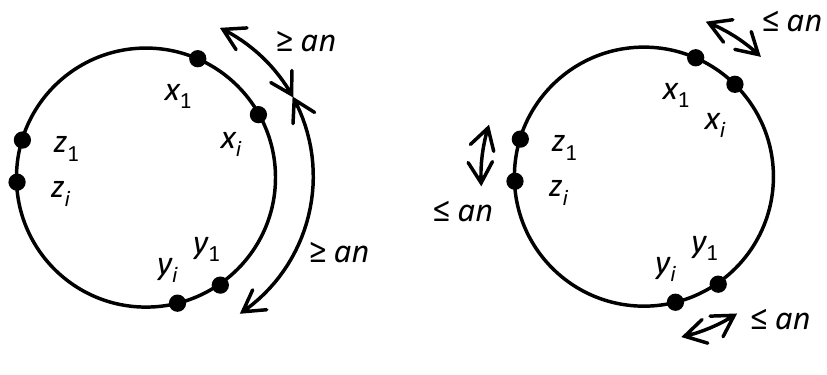}}
\caption{\label{fig_twocases}The two cases in the proof of Theorem~\ref{theorem:ma}.}
\end{figure}

We now consider two cases (see Figure~\ref{fig_twocases}):

\noindent {\bf Case 1.}
For some index $i$, $1<i\le dn$, some point among $x_i$, $y_i$, and $z_i$ is at circular distance at least $an$ from each of the three points $x_1$, $y_1$, and $z_1$. Without loss of generality let $x_i$ be that point.

Note that $P_i$ is partitioned into three caps in two different ways: The points $x_1$, $y_1$, $z_1$ define caps $Q_1$, $Q_2$, $Q_3$, while the points $x_i$, $y_i$, $z_i$ define caps $Q'_1$, $Q'_2$, $Q'_3$.
The intuition for this case is that, since $x_i$ is far from $x_1$, $y_1$, $z_1$, these two partitions are, in a sense, significantly different.

Applying the argument of Corollary~\ref{corollary:dumitrescu} to the caps $Q'_1$, $Q'_2$, $Q'_3$, we find at least $(n-3i)^2/12 \ge (n-3dn)^2/12$ edges that are good in $P_i$ and connect points within the same cap. However, not all of these edges are necessarily good in $P$, since the points in $P\setminus P_i$ might invalidate some of these edges.

However, by Lemma~\ref{lemma:monotone}, no point of $P\setminus P_i$ can invalidate two adjacent edges in the same cap, so each point of $P\setminus P_i$ invalidates at most $n/2$ of these edges. Thus, we are left with at least
\begin{equation*}
\frac{1}{12}(n-3dn)^2 - \frac{3}{2}dn^2
\end{equation*}
edges that are good in $P$ and are \emph{internal} to $Q'_1$, $Q'_2$, or $Q'_3$.

In addition, the $2an$ points of $P$ at circular distance at most $an$ from $x_i$ are all contained in the same cap $Q_1$, $Q_2$, or $Q_3$. Applying Lemma~\ref{lemma:sequence} to them, we find another $a^2n^2/8$ good edges in $P$ which were not counted previously, since they straddle two different caps among $Q'_1$, $Q'_2$, $Q'_3$.

Hence, in Case 1, we find at least
\begin{equation*}
\frac{1}{12}(n-dn)^2+\frac{1}{8}a^2n^2-\frac{3}{2}dn^2
\end{equation*}
good edges in $P$.

\noindent {\bf Case 2.}
For every index $i$ between $1$ and $dn$, each of $x_{i}$, $y_{i}$, $z_{i}$ is at circular distance
at most $an$ from one of $x_{1}$, $y_{1}$, $z_{1}$ in $P$.

In this case, the analysis is somewhat different.
For each $i$, we apply Lemma~\ref{cor:half_easy} three times, on the caps delimited by $(x_i,y_i)$, $(y_i,z_i)$, and $(z_i,x_i)$. It follows that the points
$x_{i},y_{i}$, and $z_{i}$ together are adjacent to
at least $n-3i$ good edges in $P_{i}$. As before, not all of these edges are necessarily good in $P$. However, by applying Lemma~\ref{lemma:monotone} on each of these three caps, we conclude that each point in $P\setminus P_i$ can invalidate at most one such edge. Hence, we are left with at least $n-6i$ good edges in $P$.

Now consider $P_{dn}$, and consider its partition into three caps $Q_1$, $Q_2$, $Q_3$ by the original three points $x_1$, $y_1$, $z_1$. By Corollary~\ref{corollary:dumitrescu}, there are at least $(n-3dn)^2/12$ edges that are good in $P_{dn}$ and connect two points within the same cap $Q_1$, $Q_2$, or $Q_3$.

As before, not all of these edges are necessarily good in $P$, but we can bound the number of edges invalidated by the points $x_i$, $y_i$, $z_i$, $i<dn$ of $P\setminus P_i$: Each such point is within circular distance $an$ of $x_1$, $y_1$, or $z_1$, so by Lemma~\ref{lemma:monotone}, it can invalidate at most $an$ of these edges.

Therefore, in Case 2, we find at least
\begin{equation*}
\sum_{i=1}^{dn}(n-6i)+\frac{1}{12}(n-3dn)^2-3dan^2=
\frac{1}{12}n^2+\frac{1}{2}dn^2-
\frac{9}{4}d^2n^2-3dan^2 - O(n)
\end{equation*}
good edges in $P$.

If we choose $a$ and $d$ properly, we can guarantee that in all cases we have
strictly more than $n^2/12$ good edges.
The values $a = 1/8.8$ and $d = 1/1132$ are close to the optimal ones, and they yield at least $n^2/11.981$ good edges.\end{proof}

\section{Concluding remarks}\label{sec:last}

\paragraph{A point set with many isosceles triangles.} Take $n-1$ points $x_{1}, \ldots, x_{n-1}$ evenly distributed on say, a quarter of a circle, together with the center of the circle $x_n$. The resulting $n$-point set $P$ is in convex position, and $Z(P)\ge 3n^2/4-O(n)$. Hence the method described in Lemma \ref{lemma:improvement} cannot yield a lower bound better than  $5n/12-O(1)$ for $f_{\rm conv}(n)$.

\paragraph{Bichromatic arithmetic triples.} The following combinatorial question was motivated by our study of Lemma~\ref{lemma:sequence}:

\begin{problem}\label{problem:arith3}
Let $R$ be a set of $t$ red negative numbers and $B$ be a set of $t$ blue positive numbers. What is the maximum number of triples in arithmetic progression in $R\cup B$ that are bichromatic?
\end{problem}

The argument in the proof of Lemma~\ref{lemma:sequence} yields an upper bound of $\frac{7}{8}t^2 + O(t)$ for Problem~\ref{problem:arith3}. Lev and Pinchasi~\cite{LeP12} have recently solved Problem~\ref{problem:arith3}, showing that the answer is $\frac{3}{5}t^2 \pm O(t)$. Their upper bound for the problem does not translate into an improved upper bound for the lemma. However, their lower bound can be transformed into a point set on a circular arc, which shows that Lemma~\ref{lemma:sequence} cannot be improved beyond $\frac{3}{5}t^2 - O(t)$.

\paragraph{Acknowledgements} We would like to thank the referees for their helpful comments.

\end{document}